\documentclass[aps,pra,twocolumn,superscriptaddress]{revtex4-1}
\usepackage{bm}
\usepackage{mathrsfs}
\usepackage{amsmath}
\usepackage{amssymb}
\usepackage{amsfonts}
\usepackage{amsthm}
\usepackage{graphicx}
\usepackage{color}
\usepackage{dcolumn}
\usepackage[T1]{fontenc}
\usepackage{multirow}
\usepackage{bbm}
\DeclareMathOperator{\Tr}{Tr}

\newcommand{\eff}{\mathsf{E}}

\newtheorem{proposition}{Proposition}
\newtheorem{lemma}{Lemma}
\newtheorem{corollary}{Corollary}

\newtheorem{remark}{Remark}

\begin{document}

\title{Conditioning on the Future: A Filtration-Theoretic Formalization of Wheeler's Participatory Universe}

\author{Chon-Fai Kam}
\email{dubussygauss@gmail.com}
\affiliation{Dipartimento di Fisica e Chimica ``Emilio Segr\`e'', Universit\`a degli Studi di Palermo, Via Archirafi 36, I-90123, Palermo, Italy}
\affiliation{DSIMB, Inserm, BIGR U1134, Universit\'e Paris Cit\'e \& Universit\'e de La R\'eunion, 75015, Paris, France}

\author{Kai-Wen Wong}
\affiliation{Life Actuarial Department, Taiwan Insurance Institute, 6th floor, No.~3, Nan-Hai Road, Taipei 100, Taiwan (R.O.C.)}

\date{\today}

\begin{abstract}
Wheeler's delayed-choice experiments and the time-symmetric formalisms of quantum measurement have long fueled a debate over whether time symmetry in quantum theory entails retrocausality. We argue that the debate conflates two distinct structures, and that the framework for separating them is the classical theory of conditioning: Doob $h$-transforms, Markov bridges, and enlargement of filtrations on the probabilistic side, and the forward-state/backward-effect (two-state-vector) formalism of pre- and post-selected systems on the quantum side. We sharpen Wheeler's participatory universe, delayed choice, and ``It from Bit'' into three theses and build a correspondence dictionary that maps each to a precise statement in one or both formalisms. The dictionary is anchored by two structurally isomorphic results: an operational proposition that a future measurement choice sorts the past ensemble into subensembles without disturbing any earlier marginal, and its classical counterpart, the disintegration of an unconditioned law into future-conditioned bridges---both a single fact: marginalizing over the final measurement leaves earlier marginals fixed, by completeness of the POVM together with trace preservation on the quantum side and by the tower property of conditional expectation on the classical. The conditional calculus is time-symmetric; its causal structure is not. Delayed choice is thereby a selection effect rather than retrocausation, and the apparent ``pull from the future'' is a bridge drift. We state explicitly what the framework does not do---it neither derives the Born rule nor solves the measurement problem---and delineate the legitimate scope of participatory language in quantum mechanics.
\end{abstract}

\maketitle

\section{Introduction}
\label{sec:intro}

In Wheeler's delayed-choice gedanken experiment, the decision whether to record the wave or the particle aspect of a photon is deferred until after the photon has entered the interferometer, so that---in the provocative telling---the photon's past behavior appears to be settled by a choice made in its future \cite{wheeler1978}. This is no longer a thought experiment. It has been realized with single photons \cite{jacques2007}, with a single atom \cite{manning2015}, and in the family of quantum-eraser and entanglement-swapping variants surveyed by Ma, Kofler, and Zeilinger \cite{ma2016}. The persistent difficulty is not the data, which quantum mechanics predicts without incident, but the temporal language the data seem to invite.

That language has hardened into a genuine and unresolved debate. On one side, Price argues that any time-symmetric ontology for quantum theory must be retrocausal \cite{price2012}, a claim that Leifer and Pusey sharpen into a near-no-go theorem binding time symmetry to backward causal influence \cite{leifer2017}, and that the program of locally mediated reformulations takes up constructively \cite{wharton2020}. On the other, the premises of such arguments have been contested \cite{maudlin2017}, and it has been argued on operational grounds that delayed-choice experiments carry no retrocausal implication whatsoever \cite{ellerman2015,egg2013}. The disagreement is not, at bottom, about physics; it is about which piece of mathematical structure the phrase ``time symmetry'' is entitled to name.

The premise of this paper is that the structure in question is old, uncontroversial, and misidentified: it is the theory of \emph{conditioning}. Wheeler's three slogans---the participatory universe \cite{wheeler1983}, delayed choice \cite{wheeler1978}, and ``It from Bit'' \cite{wheeler1989,wheeler1984}---never acquired a stable mathematical home, and the retrocausality debate is a downstream symptom of that homelessness. We argue that the home already exists on both sides of the quantum--classical divide. Classically, conditioning a Markov process on future information is the Doob $h$-transform \cite{doob1984}, realized on path space as a Markov bridge \cite{fitzsimmons1993} and systematized, for general enlargements of the information filtration, by the theory of \emph{grossissement} \cite{jeulin1980,aksamit2017}; its estimation-theoretic face is two-filter smoothing \cite{kalman1960,rauch1965}. Quantum-mechanically, the statistics of a pre- and post-selected system are the Aharonov--Bergmann--Lebowitz (ABL) rule \cite{abl1964} and the two-state-vector formalism \cite{aharonov1991}, whose open-system form pairs a forward-propagated state with a backward-propagated effect: the ``past quantum state'' \cite{gammelmark2013}, quantum smoothing \cite{tsang2009}, the quantum Doob transform of conditioned trajectory ensembles \cite{garrahan2010,carollo2018}, and recent time-symmetric repackagings of measurement \cite{inage2025}. This forward/backward pairing is real, and it is symmetric; what it does not carry, we will show, is causal symmetry.

Our contribution is a correspondence dictionary between these two bodies of machinery, stated at theorem level where it can be and labeled as interpretation where it cannot. We claim priority on none of the individual conclusions, and it helps to subtract the prior work at the outset. That delayed choice implies no retrocausation was argued by Ellerman in this journal through an informal ``separation fallacy'' \cite{ellerman2015} (see also \cite{egg2013}); that the operational arrow of time reflects an asymmetry of information rather than of dynamics was argued by Di Biagio, Don\`a, and Rovelli \cite{dibiagio2021}; and the forward-state/backward-effect pair, with its reduction to the Kalman and Rauch--Tung--Striebel smoothers in the classical limit, is the shared property of the quantum-smoothing and quantum-Doob literature \cite{tsang2009,gammelmark2013,garrahan2010,carollo2018,inage2025}. What remains, once these are set aside, is structural, and consists of three things: a single dictionary that maps each of Wheeler's three theses onto matched quantum and classical statements (Table~\ref{tab:dictionary}); the identification of its anchor as one exact identity, the marginalization over the final measurement read as the tower property of conditional expectation [Eq.~\eqref{eq:isomorphism}]; and the pairing, at theorem level, of the dictionary's two poles: a one-line operational proof, valid for open systems, that summed over the outcomes of \emph{any} future measurement the backward effects collapse to the identity, so that no future choice can alter any earlier marginal (Proposition~\ref{prop:nosignal}), and the exact classical image of that proof, the disintegration of an unconditioned law into future-conditioned bridges (Proposition~\ref{prop:bridge}). The contribution is the synthesis and its precision, not any conclusion it organizes. The two are the same fact---that marginalizing over the final measurement leaves every earlier marginal fixed---rendered in two probability theories [Eq.~\eqref{eq:isomorphism}], and that identity is the spine of the dictionary. Relative to the Price--Leifer--Pusey debate our statement is deliberately narrow: it concerns operational marginals, not ontological models, and therefore stands beside their ontological arguments rather than refuting them. It says that whatever one's ontology, the operational content of delayed choice is a selection effect.

Two disclaimers bound the claims. We do not derive the Born rule: probability enters as an input on both sides of the dictionary, exactly as Kolmogorov's axioms enter the classical side, and any suggestion that conditioning ``explains'' the trace formula inverts the direction of explanation. We do not solve the measurement problem: the forward--backward calculus is compatible with collapse and no-collapse interpretations alike and selects among none of them. What the dictionary delivers is a boundary---between the legitimate content of participatory language (partitions chosen by instruments, drifts induced by conditioning) and its retrocausal overreach (a past rewritten by the future), the latter excluded by Proposition~\ref{prop:nosignal}. Section~\ref{sec:wheeler} sharpens Wheeler's slogans into three theses; Secs.~\ref{sec:classical} and \ref{sec:quantum} assemble the classical and quantum conditioning machinery; Sec.~\ref{sec:dictionary} proves the two propositions and states the dictionary; Sec.~\ref{sec:scope} draws the boundary explicitly; and Sec.~\ref{sec:discussion} locates the framework relative to the two-state-vector formalism, the transactional interpretation, QBism, and stochastic--quantum correspondences.

\section{Wheeler's three theses, sharpened}
\label{sec:wheeler}

Wheeler wrote as an essayist, and his claims must be extracted before they can be formalized. We distill three theses, W1--W3, each stated in Wheeler's spirit rather than his words; the accompanying quotations locate the intended reading but do no formal work. The three differ in logical status, and the paper treats them differently: W1 is essentially a definition, fixed once the sample space of an experiment is identified with the outcome set of an instrument; W2 becomes a theorem, the marginal invariance of Proposition~\ref{prop:nosignal} together with its classical twin Proposition~\ref{prop:bridge}---and, as we note below, its weaker reading is the one Wheeler himself gave when pressed; and W3 remains an interpretation, which we flag as such and neither prove nor lean upon. For each thesis we give the strongest literal reading---in every case untenable---and the weaker reading that the dictionary of Sec.~\ref{sec:dictionary} vindicates, naming already here the quantum and classical images that become the two columns of Table~\ref{tab:dictionary}.

\medskip
\noindent\textbf{W1 (Participation).} \emph{The observer's choice of question is a constitutive ingredient of the observed phenomenon.} The literal reading---that observation conjures a mind-dependent reality---finds no support in the formalism, nor in Wheeler's more careful statements \cite{wheeler1983}. The defensible reading is structural and, once stated precisely, close to a definition. Quantum-mechanically, the choice of instrument $\{\mathcal{I}_m\}$ fixes the outcome set, and hence the sample space, of the experiment; it selects \emph{which} decomposition of the ensemble into subensembles is realized, but not the total state that is decomposed. Its classical image is the elementary act of choosing a random variable to observe---equivalently, a partition of the sample space, or a sub-$\sigma$-algebra to condition on. In neither theory does the choice create the underlying measure; it selects a way of resolving it. Participation, so read, is the selection of a partition, and nothing more occult.

\medskip
\noindent\textbf{W2 (Delayed choice).} \emph{A choice made later determines a property the system possessed earlier.} The literal reading is retrocausation, and it is false---false not only in the formalism but by Wheeler's own lights. Wheeler located this participation in the register of description, not dynamics: ``The `past' is theory. The past has no existence except as it is recorded in the present. By deciding what questions our quantum registering equipment shall put in the present we have an undeniable choice in what we have the right to say about the past''~\cite{wheeler1988}. The choice he grants the observer is over \emph{what we have the right to say} about the past---the record-relative, conditional past---not over the marginal past itself; the defensible reading of W2 is thus not a charitable reconstruction imposed on Wheeler but a distinction his ``right to say'' qualifier already invites---though we, not he, draw it cleanly---and our task is to give it a formal backbone. It has two parts, which the dictionary keeps separate. First, a later choice determines \emph{which refinement} into post-selected subensembles is applied to the earlier ensemble: the conditional statistics inside a subensemble depend on the future choice, while every marginal at the earlier time does not---exactly Proposition~\ref{prop:nosignal} on the quantum side and Proposition~\ref{prop:bridge} on the classical side. Second, \emph{within} a conditioned description the past does acquire a genuine forward tilt---the ``pull from the future'' of the delayed-choice literature---but this tilt is a term of inference, not of dynamics: the backward effect $\eff_t$ quantum-mechanically, the Doob bridge drift classically. Wheeler's own summary, that what we call the past is built on the records present now, then becomes precise, and shows his single word ``the past'' to name two objects at once: the \emph{marginal} past, fixed by the record and invariant under any later choice, and the \emph{conditional} past his ``right to say'' picks out. Separating them is the whole content of W2: participatory language is sound for the conditional past and empty for the marginal one.

\medskip
\noindent\textbf{W3 (It from Bit).} \emph{Statements about physical reality reduce to statements about conditional information.} Read as ontology---that information is the substance of the world---this is a metaphysical thesis we neither endorse nor assess. Read operationally, it is a claim about the arrow of time: the temporal asymmetry we meet in measurement reflects not a dynamical asymmetry but an asymmetry in the information available to an embedded observer, who conditions on past preparations and not on future outcomes. Quantum-mechanically this is the asymmetry between the forward state, always at hand, and the backward effect, available only after post-selection; classically it is the asymmetry between the observed filtration $(\mathcal{F}_t)$ and the terminal variable $\sigma(X_T)$ one does not ordinarily get to condition on. We flag this reading explicitly as an \emph{interpretation}, developed in Sec.~\ref{sec:scope}, not as a theorem.

\medskip
With the theses fixed, we assemble the machinery that receives them: the classical conditioning calculus in Sec.~\ref{sec:classical}, its quantum counterpart in Sec.~\ref{sec:quantum}.

\section{Classical conditioning: $h$-transforms, bridges, and
enlargement of filtrations}
\label{sec:classical}

This section assembles, in the register of ``cite and make precise,'' the pieces of classical probability that the dictionary of Sec.~\ref{sec:dictionary} requires. None of the results is new; the contribution is the selection, and the explicit derivations, which we give in full so that the parallel with the quantum side of Sec.~\ref{sec:quantum} can be read off line by line. Throughout, $(X_t)_{t\ge 0}$ is a Markov process on a Polish state space $S$, with transition kernel $p_{s,t}(x,dy)$, $s\le t$, on a filtered probability space $(\Omega,\mathcal{F},(\mathcal{F}_t),\mathbb{P})$ satisfying the usual conditions, with $(\mathcal{F}_t)$ the (completed, right-continuous) natural filtration of $X$. The kernels obey the Chapman--Kolmogorov identity
\begin{equation}
p_{s,u}(x,dy) = \int_S p_{s,t}(x,dz)\, p_{t,u}(z,dy),
\qquad s\le t\le u.
\label{eq:CK}
\end{equation}

\subsection{The Doob $h$-transform}
\label{sec:htransform}

Fix a horizon $T>0$. A measurable function $h:[0,T]\times S\to(0,\infty)$ is \emph{space--time harmonic} for $X$ if
\begin{equation}
h(s,x) = \int_S p_{s,t}(x,dy)\,h(t,y),
\qquad s\le t\le T.
\label{eq:harmonic}
\end{equation}
Equation~\eqref{eq:harmonic} says exactly that the process $M_t := h(t,X_t)$ is a $\mathbb{P}$-martingale. Indeed, using the Markov property and then \eqref{eq:harmonic},
\begin{align}
\mathbb{E}[M_t\mid\mathcal{F}_s]
&= \mathbb{E}[h(t,X_t)\mid X_s]
= \int_S p_{s,t}(X_s,dy)\,h(t,y) \nonumber\\
&= h(s,X_s) = M_s,
\qquad s\le t\le T.
\label{eq:Mmartingale}
\end{align}
Since $h>0$, $M$ is a strictly positive martingale, and $\mathbb{E}[M_t\mid\mathcal{F}_0]=M_0=h(0,X_0)$. We may therefore define, for each $t\le T$, a new measure $\mathbb{P}^h$ by prescribing its density on $\mathcal{F}_t$,
\begin{equation}
\left.\frac{d\mathbb{P}^h}{d\mathbb{P}}\right|_{\mathcal{F}_t}
= \frac{M_t}{M_0}
= \frac{h(t,X_t)}{h(0,X_0)}.
\label{eq:htransform}
\end{equation}
The martingale property \eqref{eq:Mmartingale} makes the family \eqref{eq:htransform} consistent across $t$ (the density on $\mathcal{F}_s$ is the $\mathcal{F}_s$-conditional expectation of the density on $\mathcal{F}_t$), so $\mathbb{P}^h$ is well defined on $\bigcup_{t<T}\mathcal{F}_t$ by Kolmogorov extension.

\begin{lemma}[Transformed kernel]
\label{lem:kernel}
Under $\mathbb{P}^h$, $X$ is again Markov, with transition kernel
\begin{equation}
p^h_{s,t}(x,dy) = \frac{h(t,y)}{h(s,x)}\,p_{s,t}(x,dy),
\qquad s\le t\le T.
\label{eq:hkernel}
\end{equation}
\end{lemma}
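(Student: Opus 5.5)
The plan is to compute the conditional law of a future coordinate under $\mathbb{P}^h$ directly from the density \eqref{eq:htransform}, using the abstract Bayes formula for a change of measure. Fix $s\le t<T$ (the endpoint $t=T$ is handled the same way once $\mathbb{P}^h$ is defined on $\mathcal{F}_T$, or by a limiting argument). For a bounded measurable $f:S\to\mathbb{R}$, the conditional expectation under a measure with density $Z_t=M_t/M_0$ on $\mathcal{F}_t$ is
\begin{equation*}
\mathbb{E}^h[f(X_t)\mid\mathcal{F}_s]=\frac{\mathbb{E}[Z_t f(X_t)\mid\mathcal{F}_s]}{\mathbb{E}[Z_t\mid\mathcal{F}_s]} .
\end{equation*}
The Bayes formula itself I would verify by testing both sides against an arbitrary $A\in\mathcal{F}_s$ and using the consistency of the densities established after \eqref{eq:htransform}. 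By the martingale property \eqref{eq:Mmartingale}, the denominator equals $Z_s=h(s,X_s)/h(0,X_0)$. The factor $h(0,X_0)$ is $\mathcal{F}_s$-measurable and cancels between numerator and denominator.

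The numerator then reduces to $\mathbb{E}[h(t,X_t)f(X_t)\mid\mathcal{F}_s]/h(0,X_0)$. By the $\mathbb{P}$-Markov property this equals $\int_S p_{s,t}(X_s,dy)\,h(t,y)f(y)/h(0,X_0)$. Dividing by $Z_s$ gives
\begin{equation*}
\mathbb{E}^h[f(X_t)\mid\mathcal{F}_s]=\int_S \frac{h(t,y)}{h(s,X_s)}\,p_{s,t}(X_s,dy)\,f(y),
\end{equation*}
which depends on $\mathcal{F}_s$ only through $X_s$. This proves two things at once. First, $X$ is Markov under $\mathbb{P}^h$. Second, its transition kernel is \eqref{eq:hkernel}. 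Dividing by $h(s,X_s)$ is legitimate because $h>0$.

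It remains to check that \eqref{eq:hkernel} is a genuine family of Markov kernels. Its total mass is $\int p_{s,t}(x,dy)h(t,y)/h(s,x)=1$ by the harmonicity relation \eqref{eq:harmonic}. Chapman--Kolmogorov for $p^h$ follows from \eqref{eq:CK}, since the intermediate factors $h(t,z)$ telescope.

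The only delicate point is the Bayes step: it needs $Z_t$ integrable, with consistent densities across times. Both hold because $M$ is a positive martingale with $\mathbb{E}[Z_t]=1$. I expect no real obstacle beyond stating this carefully. A minor measure-theoretic caveat is that the kernel identity holds for $X_s$-almost every $x$, as is usual for transition kernels.
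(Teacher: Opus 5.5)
Your proof is correct and follows the paper's argument essentially step for step: the abstract Bayes rule for the change of measure, the martingale property \eqref{eq:Mmartingale} to reduce the denominator to $M_s$, the $\mathbb{P}$-Markov property for the numerator, and harmonicity \eqref{eq:harmonic} for the normalization of the kernel. Your explicit Chapman--Kolmogorov telescoping check and your caveat about the endpoint $t=T$ are small additions that the paper leaves implicit; note that the paper's Kolmogorov-extension construction only covers $\bigcup_{t<T}\mathcal{F}_t$.
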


\begin{proof}
First, \eqref{eq:hkernel} is a genuine (normalized) transition kernel: by \eqref{eq:harmonic},
\begin{equation}
\int_S p^h_{s,t}(x,dy)
= \frac{1}{h(s,x)}\int_S h(t,y)\,p_{s,t}(x,dy)
= \frac{h(s,x)}{h(s,x)} = 1,
\label{eq:hnorm}
\end{equation}
so the normalization is precisely the harmonicity of $h$. Now let $f$ be bounded and measurable and $s\le t$. Using \eqref{eq:htransform}, the abstract Bayes rule for conditional expectations under a change of measure gives
\begin{equation}
\begin{split}
\mathbb{E}^h[f(X_t)\mid\mathcal{F}_s]
&= \frac{\mathbb{E}\!\left[f(X_t)\,M_t/M_0\mid\mathcal{F}_s\right]}
       {\mathbb{E}\!\left[M_t/M_0\mid\mathcal{F}_s\right]}\\
&= \frac{\mathbb{E}[f(X_t)\,M_t\mid\mathcal{F}_s]}{M_s},
\end{split}
\label{eq:bayes}
\end{equation}
where the denominator was evaluated by \eqref{eq:Mmartingale}. By the $\mathbb{P}$-Markov property the numerator is
\begin{equation}
\mathbb{E}[f(X_t)\,h(t,X_t)\mid\mathcal{F}_s]
= \int_S f(y)\,h(t,y)\,p_{s,t}(X_s,dy).
\label{eq:num}
\end{equation}
Dividing \eqref{eq:num} by $M_s=h(s,X_s)$ and comparing with \eqref{eq:hkernel},
\begin{equation}
\begin{split}
\mathbb{E}^h[f(X_t)\mid\mathcal{F}_s]
&= \int_S f(y)\,\frac{h(t,y)}{h(s,X_s)}\,p_{s,t}(X_s,dy)\\
&= \int_S f(y)\,p^h_{s,t}(X_s,dy).
\end{split}
\label{eq:markovh}
\end{equation}
The right-hand side is a function of $X_s$ alone, so $X$ is Markov under $\mathbb{P}^h$ with kernel \eqref{eq:hkernel}.
\end{proof}

\paragraph{Diffusion case: conditioning is drift modification.}
Suppose now that under $\mathbb{P}$, $X$ is an It\^o diffusion in $\mathbb{R}^d$,
\begin{equation}
dX_t = b(t,X_t)\,dt + \sigma(t,X_t)\,dW_t,
\qquad a:=\sigma\sigma^{\!\top},
\label{eq:sde}
\end{equation}
with $W$ a $\mathbb{P}$-Brownian motion and generator
$\mathcal{L}_t = b^i\partial_i + \tfrac12 a^{ij}\partial_i\partial_j$. Space--time harmonicity \eqref{eq:harmonic} is, in differential form, the backward equation
\begin{equation}
(\partial_t + \mathcal{L}_t)\,h = 0
\qquad\text{on }[0,T)\times\mathbb{R}^d.
\label{eq:backward}
\end{equation}
Applying It\^o's formula to $M_t=h(t,X_t)$ and using \eqref{eq:backward} to cancel the finite-variation part,
\begin{equation}
dM_t
= \underbrace{(\partial_t h + \mathcal{L}_t h)}_{=0}\,dt
  + \nabla h\cdot\sigma\,dW_t
= M_t\,\big(\sigma^{\!\top}\nabla\log h\big)\cdot dW_t,
\label{eq:dM}
\end{equation}
where we divided and multiplied by $M_t=h>0$ and used $\nabla h/h=\nabla\log h$. Thus $M/M_0$ is the stochastic exponential $\mathcal{E}\!\big(\int_0^\cdot \theta_s\cdot dW_s\big)$ with
\begin{equation}
\theta_t = \sigma^{\!\top}(t,X_t)\,\nabla\log h(t,X_t).
\label{eq:girsanovkernel}
\end{equation}
By Girsanov's theorem, under $\mathbb{P}^h$ the process
\begin{equation}
W^h_t := W_t - \int_0^t \theta_s\,ds
\label{eq:girsanovBM}
\end{equation}
is a Brownian motion. Substituting $dW_t = dW^h_t+\theta_t\,dt$ into \eqref{eq:sde} and using \eqref{eq:girsanovkernel},
\begin{equation}
dX_t
= \big(b + \sigma\theta_t\big)\,dt + \sigma\,dW^h_t
= \big(b + a\,\nabla\log h\big)\,dt + \sigma\,dW^h_t,
\label{eq:hsde}
\end{equation}
since $\sigma\theta_t = \sigma\sigma^{\!\top}\nabla\log h = a\,\nabla\log h$. The $h$-transform therefore leaves the diffusion coefficient $\sigma$ untouched and adds the drift $a\,\nabla\log h$. Conditioning is drift modification, and nothing more.

\subsection{Markov bridges}
\label{sec:bridges}

The bridge is the special case of \eqref{eq:hkernel} in which $h$ is the transition density to a fixed endpoint. Assume the kernels admit densities $p_{s,t}(x,y)$ against a reference measure, and fix $z\in S$. The function
\begin{equation}
h(s,x) := p_{s,T}(x,z)
\label{eq:bridgeh}
\end{equation}
is space--time harmonic: by Chapman--Kolmogorov \eqref{eq:CK},
$\int_S p_{s,t}(x,dy)\,p_{t,T}(y,z) = p_{s,T}(x,z)$, which is exactly \eqref{eq:harmonic} for \eqref{eq:bridgeh}. The resulting $\mathbb{P}^h$ is the law of $X$ conditioned on $X_T=z$, denoted $\mathbb{P}^{z}_{0,T}$; the normalization \eqref{eq:hnorm} is what makes the conditioned law a probability measure.

For standard Brownian motion ($b=0$, $\sigma=1$, $a=1$) on $[0,T]$, the endpoint density is the Gaussian
\begin{equation}
h(t,x) = p_{t,T}(x,z)
= \frac{1}{\sqrt{2\pi(T-t)}}\,
  \exp\!\Big(\!-\frac{(z-x)^2}{2(T-t)}\Big),
\label{eq:gaussianh}
\end{equation}
so that
\begin{equation}
\begin{split}
\log h(t,x) &= -\tfrac12\log\!\big(2\pi(T-t)\big) - \frac{(z-x)^2}{2(T-t)},\\
\partial_x\log h &= \frac{z-x}{T-t}.
\end{split}
\label{eq:logh}
\end{equation}
Inserting $a=1$ and \eqref{eq:logh} into the drift $a\,\nabla\log h$ of \eqref{eq:hsde} yields the Brownian bridge,
\begin{equation}
dX_t = \frac{z-X_t}{T-t}\,dt + dW^h_t,
\qquad X_0 = x_0,\ \ X_T = z.
\label{eq:brownianbridge}
\end{equation}
Two features matter for the dictionary. First, the drift in \eqref{eq:brownianbridge} is not a force: it is the gradient of the log-likelihood of the future constraint, and it diverges as $t\to T$ precisely because the conditioning sharpens to a point (the density \eqref{eq:gaussianh} degenerates there). Second, the bridge laws $\{\mathbb{P}^{z}_{0,T}\}_{z\in S}$ are the regular conditional probabilities of $\mathbb{P}$ given $X_T$; consequently they \emph{disintegrate} the unconditional law over the endpoint distribution, a fact we state and prove as Proposition~\ref{prop:bridge} in Sec.~\ref{sec:dictionary}, and which is the classical anchor of the correspondence. General constructions and regularity conditions for Markov bridges are in \cite{fitzsimmons1993}; the link between conditioning on rare future events and effective ``driven'' dynamics is developed in \cite{chetrite2015}.

\subsection{Enlargement of filtrations}
\label{sec:enlargement}

The $h$-transform and the bridge condition on a \emph{fixed} endpoint $z$. We now let the endpoint be the \emph{random} variable $Z:=X_T$ itself and ask what becomes of the dynamics when the observer is granted knowledge of $Z$ from the outset---that is, when the filtration $(\mathcal{F}_t)$ is replaced by the initially enlarged filtration
\begin{equation}
\mathcal{G}_t := \bigcap_{u>t}\big(\mathcal{F}_u \vee \sigma(Z)\big),
\qquad 0\le t < T.
\label{eq:enlarged}
\end{equation}
The systematic theory is due to Jeulin, Jacod, and Yor \cite{jeulin1980}; a modern account, with the insider-trading application that makes the ``information drift'' concrete, is \cite{aksamit2017}. The governing hypothesis is Jacod's criterion: the regular conditional laws of $Z$ given $\mathcal{F}_t$ are absolutely continuous with respect to the law $\eta$ of $Z$,
\begin{equation}
\mathbb{P}(Z\in dz\mid\mathcal{F}_t) = q_t(z)\,\eta(dz),
\qquad t<T,
\label{eq:jacod}
\end{equation}
with a density $q_t(z)=q_t(\omega,z)$ that is, for $\eta$-a.e.\ $z$, a positive $(\mathcal{F}_t)$-martingale in $t$. Under \eqref{eq:jacod}, Jacod's theorem states that every $(\mathcal{F}_t)$-local martingale $N$ remains a $(\mathcal{G}_t)$-semimartingale on $[0,T)$, with decomposition
\begin{equation}
N_t = \tilde{N}_t
     + \int_0^t \frac{d\langle N,\,q_\cdot(z)\rangle_s}{q_{s^-}(z)}\bigg|_{z=Z},
\label{eq:jacodformula}
\end{equation}
where $\tilde{N}$ is a $(\mathcal{G}_t)$-local martingale and $\langle\cdot,\cdot\rangle$ is the $(\mathcal{F}_t)$-predictable covariation. The finite-variation term in \eqref{eq:jacodformula} is the \emph{information drift}: the observer who knows $Z$ sees the same paths, decomposed with an extra drift computed from the conditional law of $Z$.

\paragraph{Worked case: Brownian motion, $Z=W_T$.}
Let $N=W$ be standard Brownian motion and $Z=W_T$. The conditional law of $W_T$ given $\mathcal{F}_t$ is Gaussian with mean $W_t$ and variance $T-t$; against the unconditional law $\eta=\mathcal{N}(W_0,T)$ its density is
\begin{equation}
\begin{split}
q_t(z) &= \frac{p_{t,T}(W_t,z)}{p_{0,T}(W_0,z)}\\
&= \frac{1}{p_{0,T}(W_0,z)}\cdot
  \frac{\exp\!\big(\!-(z-W_t)^2/2(T-t)\big)}{\sqrt{2\pi(T-t)}}.
\end{split}
\label{eq:qdensity}
\end{equation}
For fixed $z$, $q_t(z)$ is (up to the constant $p_{0,T}(W_0,z)^{-1}$) exactly the harmonic martingale $M_t=h(t,W_t)$ of \eqref{eq:gaussianh}. Hence, by the same It\^o computation \eqref{eq:dM}--\eqref{eq:logh} that produced the bridge,
\begin{equation}
\begin{split}
dq_t(z) = q_t(z)\,\frac{z-W_t}{T-t}\,dW_t
\quad\Longrightarrow\quad\\
d\langle W,\,q_\cdot(z)\rangle_t = q_t(z)\,\frac{z-W_t}{T-t}\,dt.
\end{split}
\label{eq:covariation}
\end{equation}
Dividing by $q_t(z)$ and evaluating at $z=Z=W_T$, the information drift in \eqref{eq:jacodformula} is $(W_T-W_t)/(T-t)$, so
\begin{equation}
W_t = \tilde{W}_t + \int_0^t \frac{W_T-W_s}{T-s}\,ds,
\qquad 0\le t<T,
\label{eq:enlargedBM}
\end{equation}
with $\tilde{W}$ a $(\mathcal{G}_t)$-Brownian motion. Equation~\eqref{eq:enlargedBM} is the Brownian bridge drift \eqref{eq:brownianbridge} \emph{with the fixed endpoint $z$ replaced by the random endpoint $W_T$}. This is the precise sense in which the two constructions coincide: the $h$-transform pins the endpoint and conditions a single bridge; initial enlargement carries the endpoint as a random variable and produces the same drift, now a function of $Z$. Jacod's absolute-continuity hypothesis \eqref{eq:jacod} holds on $[0,T)$ and fails at $t=T$, exactly where the drift \eqref{eq:enlargedBM} blows up---the semimartingale property and the sharpness of the terminal conditioning break down together. Integrating \eqref{eq:enlargedBM} against the law of $W_T$ recovers the unconditional dynamics of $W$, the path-space form of the disintegration used in Proposition~\ref{prop:bridge}.

\subsection{Two-filter smoothing}
\label{sec:smoothing}

The estimation-theoretic face of the same structure is the forward/backward factorization of the smoother, which we record because it is the exact classical silhouette of the quantum forward-state/backward-effect pair of Sec.~\ref{sec:quantum}. Let a signal $(X_t)$ be observed through a process $(Y_t)$, and write $\mathcal{Y}_{[a,b]}=\sigma(Y_u: a\le u\le b)$. Two conditional laws of the signal are of interest: the \emph{filter}
\begin{equation}
\pi_t(dx) := \mathbb{P}\big(X_t\in dx \mid \mathcal{Y}_{[0,t]}\big),
\label{eq:filter}
\end{equation}
propagated forward in time by the observations up to $t$, and the \emph{smoother}
\begin{equation}
\pi_{t\mid T}(dx) := \mathbb{P}\big(X_t\in dx \mid \mathcal{Y}_{[0,T]}\big),
\qquad T>t,
\label{eq:smoother}
\end{equation}
which additionally conditions on the future observations $\mathcal{Y}_{(t,T]}$---an enlargement of the conditioning $\sigma$-algebra by future data. Writing $\beta_t(x)$ for the likelihood of the future record given the present state,
\begin{equation}
\beta_t(x) := p\big(\mathcal{Y}_{(t,T]}\,\big|\,X_t=x\big),
\label{eq:backwardfilter}
\end{equation}
the Markov property of $(X,Y)$ factorizes the smoother as a normalized product of the forward filter and this backward likelihood \cite{kalman1960,rauch1965},
\begin{equation}
\pi_{t\mid T}(dx)
= \frac{\beta_t(x)\,\pi_t(dx)}{\displaystyle\int_S \beta_t(x')\,\pi_t(dx')}.
\label{eq:twofilter}
\end{equation}
Here $\pi_t$ obeys the forward filtering (Kushner--Stratonovich, or in the linear-Gaussian case Kalman) equation, integrated from $0$ upward, while $\beta_t$ obeys a backward (Zakai-adjoint) equation integrated from $T$ downward, with terminal condition $\beta_T\equiv 1$. The structural content of \eqref{eq:twofilter} is the point we carry to the quantum side: the smoothed estimate is a \emph{pairing} of a forward object $\pi_t$ and a backward object $\beta_t$, normalized by their overlap. In Sec.~\ref{sec:quantum} the forward state $\rho_t$ plays the role of $\pi_t$, the backward effect $\eff_t$ that of $\beta_t$, and $\Tr[\eff_t\rho_t]$ that of the denominator $\int\beta_t\,d\pi_t$; the open-system ABL rule \eqref{eq:openabl} is the operator form of \eqref{eq:twofilter}. In the linear-Gaussian setting \eqref{eq:twofilter} is the Fraser--Potter form of the Rauch--Tung--Striebel smoother \cite{rauch1965}, in which $\pi_t$ and $\beta_t$ are Gaussian and the product is computed by adding inverse covariances.

\section{Quantum conditioning: the forward-state/backward-effect pair}
\label{sec:quantum}

\subsection{Pre- and post-selected ensembles: the ABL rule}
\label{sec:abl}

For a closed system prepared at $t_0$ in $|\psi\rangle$, unitarily evolved, subjected at $t_1$ to a projective measurement $\{\Pi_m\}$, and post-selected at $t_2$ on outcome $|\phi\rangle$, the probability of the intermediate outcome $m$ conditioned on both boundary conditions is given by the ABL rule \cite{abl1964},
\begin{equation}
p(m \mid \psi, \phi)
= \frac{\left| \langle \phi | U_2\, \Pi_m\, U_1 | \psi \rangle \right|^2}
       {\sum_{m'} \left| \langle \phi | U_2\, \Pi_{m'}\, U_1 | \psi \rangle \right|^2},
\label{eq:abl}
\end{equation}
the founding formula of the two-state-vector formalism \cite{aharonov1991} and the point of departure for the weak-value program \cite{dressel2014}.

\subsection{Open systems: states, effects, instruments}
\label{sec:opensystems}

The open-system generalization replaces unitaries by CPTP maps, projective measurements by instruments, and the final projection by a POVM \cite{davies1976,ozawa1984,breuer2002}. Fix three times $t_0 < t_1 < t_2$. A state $\rho$ is prepared at $t_0$; the evolution $t_0 \to t_1$ is a CPTP map $\Lambda_1$; at $t_1$ an instrument $\{\mathcal{I}_m\}$ acts (each $\mathcal{I}_m$ completely positive, $\sum_m \mathcal{I}_m$ trace-preserving); the evolution $t_1 \to t_2$ is a CPTP map $\Lambda_2$; at $t_2$ a POVM $\{F_f\}$, $\sum_f F_f = \mathbbm{1}$, is measured. The joint statistics are
\begin{equation}
p(m, f)
= \Tr\!\left[ F_f\, \Lambda_2\!\big( \mathcal{I}_m( \Lambda_1(\rho) ) \big) \right].
\label{eq:joint}
\end{equation}
Introducing the backward effect $\eff \equiv \Lambda_2^{\dagger}(F_f)$, where $\Lambda_2^{\dagger}$ is the Heisenberg-picture adjoint, Eq.~\eqref{eq:joint} reads $\Tr[\eff\, \mathcal{I}_m(\Lambda_1(\rho))]$, and the conditional probability of $m$ given final outcome $f$ is the open-system ABL rule
\begin{equation}
p(m \mid f)
= \frac{\Tr\!\left[ \eff\, \mathcal{I}_m( \Lambda_1(\rho) ) \right]}
       {\sum_{m'} \Tr\!\left[ \eff\, \mathcal{I}_{m'}( \Lambda_1(\rho) ) \right]}.
\label{eq:openabl}
\end{equation}
The closed-system rule \eqref{eq:abl} is the special case of \eqref{eq:openabl} in which $\rho=|\psi\rangle\langle\psi|$ is pure, the maps $\Lambda_i=U_i(\cdot)U_i^{\dagger}$ are unitary, the instrument is projective ($\mathcal{I}_m=\Pi_m(\cdot)\Pi_m$), and the final POVM is the rank-one projector $F=|\phi\rangle\langle\phi|$, whence $\eff=U_2^{\dagger}|\phi\rangle\langle\phi|U_2$ and each trace collapses to a squared amplitude. The two rules are therefore one formula at two levels of generality.

\subsection{The forward--backward pair in the literature}
\label{sec:pqs}

The pair (forward state $\rho_t$, backward effect $\eff_t$) with continuous-time dynamics
\begin{equation}
\dot{\rho}_t = \mathcal{L}(\rho_t),
\qquad
\dot{\eff}_t = -\mathcal{L}^{\dagger}(\eff_t),
\label{eq:forwardbackward}
\end{equation}
for a Lindblad generator $\mathcal{L}$ \cite{lindblad1976}, together with intermediate statistics of the form \eqref{eq:openabl}, is not a single author's construction but a recurring structure. It appears as quantum smoothing in continuous measurement theory \cite{tsang2009}, as the ``past quantum state'' of Gammelmark, Julsgaard, and M{\o}lmer \cite{gammelmark2013}, and---in exponentially tilted form---as the quantum Doob transform in the large-deviation theory of quantum jump trajectories, where conditioning on atypical measurement records is implemented by an auxiliary ``driven'' dynamics \cite{garrahan2010,carollo2018}, in direct analogy with the classical constructions of \cite{chetrite2015}. A recent addition to this lineage is the time-symmetric reformulation of \cite{inage2025}, which packages Eqs.~\eqref{eq:forwardbackward}--\eqref{eq:openabl} with thermodynamic consistency via Spohn's inequality \cite{spohn1978} and, like our Sec.~\ref{sec:smoothing}, notes their reduction to the Kalman and Rauch--Tung--Striebel estimators in the classical limit. We draw on this shared machinery, but our object differs: not another reformulation of quantum measurement, but the formalization of Wheeler's theses and the quantum--classical dictionary anchored by Proposition~\ref{prop:nosignal} and its classical twin. The duality in \eqref{eq:forwardbackward} preserves the pairing: writing $\mathcal{L}^{\dagger}$ for the adjoint defined by $\Tr[\mathcal{L}^{\dagger}(\eff)\rho]=\Tr[\eff\,\mathcal{L}(\rho)]$,
\begin{equation}
\begin{split}
\frac{d}{dt}\Tr[\eff_t\rho_t]
&= \Tr[\dot{\eff}_t\rho_t] + \Tr[\eff_t\dot{\rho}_t]\\
&= -\Tr[\mathcal{L}^{\dagger}(\eff_t)\rho_t] + \Tr[\eff_t\mathcal{L}(\rho_t)]
= 0,
\end{split}
\label{eq:pairingconserved}
\end{equation}
which is the infinitesimal form of the normalization used in the proof of Proposition~\ref{prop:nosignal} below, and the continuous-time counterpart of the fixed total mass $\int\beta_t\,d\pi_t$ underlying the classical smoother \eqref{eq:twofilter}.

The quantum Doob transform deserves emphasis because it upgrades the bridge analogy of Sec.~\ref{sec:bridges} from metaphor to theorem: conditioning a quantum trajectory ensemble on future measurement data modifies the effective generator exactly as the classical $h$-transform modifies a drift \cite{carollo2018}.

\subsection{Worked example: delayed choice in a Mach--Zehnder interferometer, with and without a which-path monitor}
\label{sec:mzi}

We make Proposition~\ref{prop:nosignal} concrete in the setting that motivated it. A single photon enters a Mach--Zehnder interferometer; after the first beam splitter its two arms span a qubit space $\mathbb{C}^2$ with which-path basis $\{|0\rangle,|1\rangle\}$. The \emph{delayed choice} is whether to insert the second beam splitter $\mathrm{BS}_2$ before the output detectors---revealing the wave aspect through interference---or to omit it, revealing which path the photon took; in Wheeler's arrangement the choice is made after the photon has passed $\mathrm{BS}_1$. Two arrangements must be kept apart, and we take them in turn: the \emph{monitored} variant, in which a which-path instrument acts at $t_1$---so that there is a genuine earlier marginal for Proposition~\ref{prop:nosignal} to protect---and the \emph{unmonitored} experiment of Wheeler, in which no such instrument is present. The monitored case exhibits the proposition in $2\times2$ matrices; the unmonitored case exhibits why, there, the proposition is not even needed.

Cast in the language of Sec.~\ref{sec:opensystems} with $t_0<t_1<t_2$, take $\Lambda_1=\mathrm{id}$ and the post-$\mathrm{BS}_1$ state
\begin{equation}
\rho_1 = |\psi\rangle\langle\psi|,\quad
|\psi\rangle=\tfrac{1}{\sqrt2}\big(|0\rangle+|1\rangle\big),\quad
\rho_1=\tfrac12\!\begin{pmatrix}1&1\\1&1\end{pmatrix}.
\label{eq:mzistate}
\end{equation}
At $t_1$ place the which-path instrument $\mathcal{I}_m(\cdot)=\Pi_m(\cdot)\Pi_m$ with $\Pi_0=|0\rangle\langle0|$, $\Pi_1=|1\rangle\langle1|$, so that $\sum_m\mathcal{I}_m$ is the (trace-preserving) dephasing channel and
\begin{equation}
\mathcal{I}_0(\rho_1)=\tfrac12|0\rangle\langle0|,\qquad
\mathcal{I}_1(\rho_1)=\tfrac12|1\rangle\langle1|.
\label{eq:mziinstr}
\end{equation}
The which-path marginal is thereby fixed before any future choice is named:
\begin{equation}
p(m)=\Tr[\mathcal{I}_m(\rho_1)]
=|\langle m|\psi\rangle|^2=\tfrac12,\qquad m=0,1.
\label{eq:mzimarginal}
\end{equation}
The delayed choice is the selection of $(\Lambda_2,\{F_f\})$ at $t_2$, with the output-port POVM $F_+=|0\rangle\langle0|$, $F_-=|1\rangle\langle1|$ ($F_++F_-=\mathbbm{1}$) in both cases and two options for $\Lambda_2$: (a) $\mathrm{BS}_2$ \emph{in} (wave), $\Lambda_2^{(a)}(\cdot)=U(\cdot)U^{\dagger}$ with the balanced splitter $U=\tfrac{1}{\sqrt2}\big(\begin{smallmatrix}1&1\\1&-1\end{smallmatrix}\big)$; and (b) $\mathrm{BS}_2$ \emph{out} (particle), $\Lambda_2^{(b)}=\mathrm{id}$. Pushing \eqref{eq:mziinstr} through each option and reading off $p(m,f)=\Tr[F_f\,\Lambda_2(\mathcal{I}_m(\rho_1))]$ gives
\begin{equation}
p(m,f):\quad
\begin{array}{c|cc|cc}
 & \multicolumn{2}{c|}{\text{(a) }\mathrm{BS}_2\text{ in}}
 & \multicolumn{2}{c}{\text{(b) }\mathrm{BS}_2\text{ out}}\\
 & f{=}{+} & f{=}{-} & f{=}{+} & f{=}{-}\\\hline
m{=}0 & \tfrac14 & \tfrac14 & \tfrac12 & 0\\[2pt]
m{=}1 & \tfrac14 & \tfrac14 & 0 & \tfrac12
\end{array}
\label{eq:mzitable}
\end{equation}
For (a) we used $U|0\rangle=\tfrac{1}{\sqrt2}(|0\rangle+|1\rangle)$ and $U|1\rangle=\tfrac{1}{\sqrt2}(|0\rangle-|1\rangle)$, so each dephased arm spreads evenly over the two ports; for (b) the ports read the arms directly. The two future choices produce \emph{different} conditional structures---in (a) the port is uniform given the path, $p(f\mid m)=\tfrac12$; in (b) it determines the path, $p(f\mid m)=\delta_{fm}$---yet the row sums coincide,
\begin{equation}
\sum_f p(m,f)=\tfrac12=p(m)
\qquad\text{in both (a) and (b)},
\label{eq:mzirowsum}
\end{equation}
independent of whether $\mathrm{BS}_2$ was inserted. This is Proposition~\ref{prop:nosignal} in $2\times2$ matrices: the delayed choice refines the $t_1$ ensemble through post-selection on $f$, but leaves the which-path marginal untouched.

\paragraph{Monitored versus unmonitored.}
In the monitored computation just performed a which-path instrument acts at $t_1$, so option (a) shows \emph{no} interference: the port marginal $p(f)=\sum_m p(m,f)=\tfrac12$ carries no fringes, as it must once which-path information has been recorded. What Proposition~\ref{prop:nosignal} certifies here is the invariance of the earlier marginal \eqref{eq:mzimarginal}, not the presence of fringes. The \emph{unmonitored} experiment---Wheeler's own---removes the $t_1$ instrument entirely; then $\mathrm{BS}_2$-in returns $|\psi\rangle$ to a definite output port ($U|\psi\rangle=|0\rangle$, full interference) while $\mathrm{BS}_2$-out reads the path directly. Now there is no $t_1$ which-path marginal at all for the future to rewrite, so a fortiori none is retrocaused---the ``which path'' the later choice seems to fix was never a property of the photon at $t_1$. Proposition~\ref{prop:nosignal} thus meets the two arrangements from opposite sides: in the monitored one it holds the earlier marginal \eqref{eq:mzimarginal} fixed, and in the unmonitored one it holds trivially, the only earlier marginal being the trivial one. Either way the retrocausal reading finds nothing to grip, which is the separation-fallacy point of Ellerman~\cite{ellerman2015}, read here off \eqref{eq:mzimarginal} and \eqref{eq:mzirowsum}.

\section{The correspondence dictionary}
\label{sec:dictionary}

We can now state the dictionary. Its two anchors are a quantum proposition and a classical proposition that are, at the level of their proofs, the same structural fact: marginalizing over the final measurement leaves the earlier statistics fixed---the completeness of the POVM together with trace preservation on the quantum side, the tower property of conditional expectation on the classical.

\subsection{No signaling from the future}
\label{sec:nosignal}

\begin{proposition}[No signaling from future choices]
\label{prop:nosignal}
In the setting of Sec.~\ref{sec:opensystems}, for every initial state $\rho$ and every instrument $\{\mathcal{I}_m\}$, the marginal distribution of the intermediate outcome,
\begin{equation}
p(m) \;=\; \sum_f p(m,f)
\;=\; \Tr\!\left[ \mathcal{I}_m( \Lambda_1(\rho) ) \right],
\label{eq:marginal}
\end{equation}
is independent of the choice of $(\Lambda_2, \{F_f\})$. Consequently, the only effect of the future measurement choice is to induce a refinement of the $t_1$ ensemble into subensembles with conditional statistics \eqref{eq:openabl}; the ensemble statistics at $t_1$ are unaffected.
\end{proposition}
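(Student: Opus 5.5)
The plan is to compute the marginal directly from the joint statistics \eqref{eq:joint} and push the sum over $f$ inside the trace. Set $\sigma_m := \mathcal{I}_m(\Lambda_1(\rho))$. This operator is determined entirely by data up to $t_1$: it depends on $\rho$, $\Lambda_1$ and the instrument, and not on $(\Lambda_2,\{F_f\})$. By linearity of the trace and of the sum,
\begin{equation*}
\sum_f p(m,f) = \sum_f \Tr\!\left[F_f\,\Lambda_2(\sigma_m)\right] = \Tr\!\Big[\Big(\textstyle\sum_f F_f\Big)\Lambda_2(\sigma_m)\Big].
\end{equation*}
Completeness of the POVM, $\sum_f F_f=\mathbbm{1}$, turns this into $\Tr[\Lambda_2(\sigma_m)]$. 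Trace preservation of $\Lambda_2$ then gives $\Tr[\sigma_m]$, which is \eqref{eq:marginal}.

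It is worth recording the equivalent Heisenberg-picture form, since it is the one the dictionary uses. Summed over $f$, the backward effects satisfy
\begin{equation*}
\sum_f \Lambda_2^\dagger(F_f) = \Lambda_2^\dagger(\mathbbm{1}) = \mathbbm{1}.
\end{equation*}
This holds because $\Lambda_2^\dagger$ is linear, and because trace preservation of $\Lambda_2$ is equivalent to unitality of $\Lambda_2^\dagger$, via the duality $\Tr[\Lambda_2^\dagger(\mathbbm{1})X]=\Tr[\Lambda_2(X)]=\Tr[X]$ for all $X$. So the backward effects collapse to the identity, and the marginal reduces to $\Tr[\mathbbm{1}\,\sigma_m]$. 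This is the discrete counterpart of the conserved pairing \eqref{eq:pairingconserved}.

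For the ``consequently'' clause, I would factor the joint law as $p(m,f)=p(f)\,p(m\mid f)$ on outcomes with $p(f)>0$, where $p(m\mid f)$ is given by \eqref{eq:openabl}. Summing over $f$ gives the mixture identity $p(m)=\sum_f p(f)\,p(m\mid f)$. Its left side is independent of the future choice, so the future choice only determines how the fixed $t_1$ ensemble is split into subensembles labelled by $f$. The weights $p(f)$ and the conditional distributions $p(m\mid f)$ may change with the choice, but their average may not. This is precisely the quantum form of the tower property that Proposition~\ref{prop:bridge} will mirror.

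There is no genuine obstacle here. The only step needing care is to state explicitly which hypotheses do the work: completeness of $\{F_f\}$ and trace preservation of $\Lambda_2$ are both used, and positivity is not needed for the marginal identity. I would also note the conventions at the edges. Outcomes $f$ with $p(f)=0$ are simply dropped from the refinement. The index set of $f$ is assumed finite or countable, so that the interchange of sum and trace is justified, for instance by normality of the trace for countable POVMs.
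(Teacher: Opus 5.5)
Your proof is correct and is essentially the paper's own: you pull $\sum_f$ inside the trace by linearity, use completeness $\sum_f F_f=\mathbbm{1}$, then trace preservation of $\Lambda_2$, and you add the same Heisenberg-picture remark $\sum_f\Lambda_2^\dagger(F_f)=\mathbbm{1}$ that the paper records right after its proof. Your mixture identity $p(m)=\sum_f p(f)\,p(m\mid f)$ spells out the ``consequently'' clause, which the paper treats as an immediate reading of the marginal identity.
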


\begin{proof}
By linearity of the trace and completeness of the POVM,
\begin{align}
\sum_f p(m,f)
&= \Tr\!\Big[ \Big( \textstyle\sum_f F_f \Big)\,
   \Lambda_2\!\big( \mathcal{I}_m( \Lambda_1(\rho) ) \big) \Big] \nonumber\\
&= \Tr\!\left[ \Lambda_2\!\big( \mathcal{I}_m( \Lambda_1(\rho) ) \big) \right]
 = \Tr\!\left[ \mathcal{I}_m( \Lambda_1(\rho) ) \right],
\end{align}
where the last equality holds because $\Lambda_2$ is trace-preserving. No dependence on $\Lambda_2$ or $\{F_f\}$ remains.
\end{proof}

In adjoint language the mechanism is a single line: $\sum_f \Lambda_2^{\dagger}(F_f) = \Lambda_2^{\dagger}(\mathbbm{1}) = \mathbbm{1}$---the backward effects, summed over final outcomes, collapse to the identity. This is the operator-algebraic engine of the entire dictionary; the Mach--Zehnder computation of Sec.~\ref{sec:mzi}, where the which-path marginal stays $\tfrac12$ whether or not the second beam splitter is inserted [Eq.~\eqref{eq:mzirowsum}], is Proposition~\ref{prop:nosignal} in its smallest nontrivial instance. We emphasize that the proposition is \emph{elementary}: it is the temporal face of the ordinary no-signaling property---an earlier reduced state is untouched by any later operation---its proof a line of trace algebra, completeness of the POVM followed by trace preservation of $\Lambda_2$. That shallowness is the point, not a defect. A retrocausal reading of delayed choice needs some earlier marginal to answer to a later choice; the proposition says none does, and says so trivially, so ``the future settles the past'' is operationally empty at exactly the level where it would have to be non-empty to signal. The content lies in the reading, not the theorem: the labor is in recognizing that this invariance exhausts the defensible meaning of W2, and that its classical twin, Proposition~\ref{prop:bridge}, is the same fact in another calculus.

\begin{corollary}[Exact content of W2]
\label{cor:w2}
In a delayed-choice arrangement, the conditional distributions $p(m \mid f)$ depend on the future choice, but this dependence is a \emph{selection effect}---a choice of refinement---not a dynamical disturbance. An observer with access only to the marginal statistics at $t_1$ cannot detect, even in principle, which future measurement was chosen, nor whether any was performed. The appearance of retrocausation arises entirely from reporting post-selected subensembles.
\end{corollary}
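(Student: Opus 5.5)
The plan is to derive Corollary~\ref{cor:w2} from Proposition~\ref{prop:nosignal} by splitting it into its three claims. Each claim will be made precise as a statement about the joint law $p(m,f)$ of Eq.~\eqref{eq:joint}, with the future choice written as a parameter $c=(\Lambda_2,\{F_f\})$.

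\emph{First claim (selection, not disturbance).} By Bayes' rule, the open-system ABL rule \eqref{eq:openabl} can be written as $p_c(m\mid f)=p_c(m,f)/p_c(f)$. The joint law factorizes as $p_c(m,f)=p(m)\,p_c(f\mid m)$, and Proposition~\ref{prop:nosignal} shows that the first factor $p(m)=\Tr[\mathcal{I}_m(\Lambda_1(\rho))]$ carries no $c$-dependence. The entire dependence on $c$ therefore sits in the forward likelihood $p_c(f\mid m)=\Tr[\eff^{(c)}_f\,\mathcal{I}_m(\Lambda_1(\rho))]/p(m)$, that is, in the backward effect $\eff^{(c)}_f=\Lambda_2^\dagger(F_f)$.

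Each choice $c$ thus yields a decomposition of one fixed law,
\begin{equation*}
p(m)=\sum_f p_c(f)\,p_c(m\mid f).
\end{equation*}
This is a mixture of subensembles whose average is the same for every $c$, and it is exactly what I will mean by ``a choice of refinement.'' The Mach--Zehnder table \eqref{eq:mzitable} serves as a witness that $p_c(m\mid f)$ genuinely varies with $c$, so the first half of the claim is not vacuous.

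\emph{Second claim (undetectability).} I will formalize ``an observer with access only to the marginal statistics at $t_1$'' as an observer whose data are i.i.d.\ draws of $m$ from $p(m)$. Since this law is identical for all $c$, the likelihood ratio between any two hypotheses $c,c'$ is identically one. No test, with any finite or infinite number of samples, can then distinguish them; this is the ``even in principle'' part.

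For ``nor whether any was performed,'' I add the null option $c_0$, meaning no measurement is made at $t_2$. This can be modeled as the trivial POVM $\{\mathbbm{1}\}$ with $\Lambda_2=\mathrm{id}$, whose marginal is also $\Tr[\mathcal{I}_m(\Lambda_1(\rho))]$ by Proposition~\ref{prop:nosignal}. Discarding the system can be treated the same way, since it is a trace-preserving channel.

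\emph{Third claim (retrocausation is an artifact of post-selection).} This follows by contraposition. Every $c$-dependent quantity available to the experimenter is a function of $p_c(m,f)$. Those that are functions of the $m$-marginal alone are constant in $c$. The only $c$-sensitive statistics are therefore those that condition on, or jointly resolve, $f$, which is to say post-selected subensembles.

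The main obstacle is the third claim, because the statement is partly interpretive. I would handle it by proving the precise version: a statistic $g$ of the data is $c$-independent for every $\rho$ and every $\{\mathcal{I}_m\}$ whenever $g$ is $\sigma(m)$-measurable. I would then explicitly label the identification of ``appearance of retrocausation'' with ``$c$-dependence of $p_c(m\mid f)$'' as the interpretive step, consistent with how W2 is handled in Sec.~\ref{sec:wheeler}.
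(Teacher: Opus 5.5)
Your proposal is correct and follows the paper's route. The paper treats the corollary as an immediate reading of Proposition~\ref{prop:nosignal}, whose invariance of $p(m)$ over all $(\Lambda_2,\{F_f\})$ already covers the no-measurement case through the trivial POVM $\{\mathbbm{1}\}$; it combines this with the ABL rule \eqref{eq:openabl}, and the Mach--Zehnder table \eqref{eq:mzitable} witnesses that $p(m\mid f)$ really does change with the choice. Your factorization $p_c(m,f)=p(m)\,p_c(f\mid m)$, the likelihood-ratio reading of ``even in principle,'' and the explicit labeling of the last sentence as interpretive sharpen what the paper leaves implicit rather than giving a different argument.
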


\begin{remark}[Relation to no-signaling in time]
\label{rem:nsit}
Proposition~\ref{prop:nosignal} must be distinguished from the ``no-signaling in time'' (NSIT) condition of Kofler and Brukner \cite{kofler2013}. NSIT demands that performing a measurement at an \emph{earlier} time not alter outcome statistics at a \emph{later} time; it is a definition of macrorealism, and quantum mechanics generically \emph{violates} it, because intermediate measurements disturb subsequent evolution. Proposition~\ref{prop:nosignal} points in the opposite temporal direction and is a \emph{theorem} of quantum mechanics: later choices never alter earlier marginals. The two statements are complementary, and their asymmetry---forward disturbance is possible, backward disturbance is not---is itself a precise expression of the causal arrow inside an otherwise time-symmetric conditional formalism.
\end{remark}

\subsection{The classical counterpart: bridge mixtures}
\label{sec:bridgemixture}

\begin{proposition}[Mixture decomposition of bridges]
\label{prop:bridge}
Let $(X_t)_{t \in [0,T]}$ be a Markov process with initial law $\mu$ and terminal law $\nu = \mathbb{P}\circ X_T^{-1}$, admitting a regular conditional probability $\mathbb{P}^{z}_{0,T}(\cdot) = \mathbb{P}(\,\cdot \mid X_T = z)$---the bridge law---for $\nu$-almost every $z$ \cite{fitzsimmons1993}. Then for every $s < T$ and every bounded measurable $g$,
\begin{equation}
\mathbb{E}[\, g(X_s) \,]
= \int_S \mathbb{E}^{z}_{0,T}[\, g(X_s) \,]\; \nu(dz).
\label{eq:mixture}
\end{equation}
In particular, the marginal law of $X_s$ is the $\nu$-mixture of its bridge marginals and does not depend on the terminal value conditioned upon: conditioning on $X_T$ refines the ensemble into bridges, exactly as the future POVM choice refines the $t_1$ ensemble in Proposition~\ref{prop:nosignal}, while every earlier marginal is left fixed.
\end{proposition}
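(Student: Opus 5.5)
The plan is to prove \eqref{eq:mixture} as the tower property of conditional expectation, the classical counterpart of the completeness relation $\sum_f \Lambda_2^{\dagger}(F_f)=\mathbbm{1}$ in Proposition~\ref{prop:nosignal}. Fix $s<T$ and a bounded measurable $g$, and set $Y:=g(X_s)$, which is bounded and $\mathcal{F}$-measurable. The first step is $\mathbb{E}[Y]=\mathbb{E}\big[\mathbb{E}[Y\mid\sigma(X_T)]\big]$, which holds by the tower property alone and needs no Markov structure.

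The second step identifies the inner conditional expectation through the bridge laws. By the definition of a regular conditional probability of $\mathbb{P}$ given $X_T$, the map $z\mapsto \mathbb{P}^{z}_{0,T}(A)$ is measurable for each event $A$, and $\mathbb{P}(A\cap\{X_T\in B\})=\int_B \mathbb{P}^{z}_{0,T}(A)\,\nu(dz)$ for all Borel $B\subseteq S$. I will take $B=S$ in this identity and get $\mathbb{P}(A)=\int_S \mathbb{P}^{z}_{0,T}(A)\,\nu(dz)$ for every event $A$. Choosing $A=\{X_s\in C\}$ gives \eqref{eq:mixture} for indicators $g=\mathbbm{1}_C$.

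The third step extends this to bounded measurable $g$. By linearity it holds for simple functions. Any bounded measurable $g$ is a uniform limit of simple functions $g_n$ with $|g_n|\le\|g\|_\infty$. Bounded convergence then applies on both sides: on the left under $\mathbb{P}$, and on the right first inside each $\mathbb{E}^{z}_{0,T}$ and then under $\nu$, with the constant $\|g\|_\infty$ as dominating function. Measurability of $z\mapsto\mathbb{E}^{z}_{0,T}[g(X_s)]$ passes to the limit, so the right-hand side of \eqref{eq:mixture} is well defined. The ``in particular'' clause is just \eqref{eq:mixture} with $g=\mathbbm{1}_C$, which says $\mathbb{P}\circ X_s^{-1}=\int_S \mathbb{P}^{z}_{0,T}\circ X_s^{-1}\,\nu(dz)$. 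The left side contains no $z$, so the marginal does not depend on the terminal value conditioned upon; the $z$-dependence lives only inside the individual bridges.

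The main obstacle is one of bookkeeping rather than difficulty. The bridge laws exist only $\nu$-a.e., and in the density setting of Sec.~\ref{sec:bridges} they are given as $h$-transforms with $h(t,x)=p_{t,T}(x,z)$. For consistency with that section I will note a cross-check. In that setting the bridge marginal at time $s$ has density $p_{0,s}(x_0,x)\,p_{s,T}(x,z)/p_{0,T}(x_0,z)$, by Lemma~\ref{lem:kernel} with $\mu=\delta_{x_0}$, a case that mixes over $\mu$. Integrating this density against $\nu(dz)=p_{0,T}(x_0,z)\,dz$ cancels the normalizer. Since the transition density integrates to one in $z$, the result is $p_{0,s}(x_0,x)$. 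The normalization \eqref{eq:hnorm} is thus the direct analogue of the POVM completeness used in Proposition~\ref{prop:nosignal}. The general proof itself uses only the regular-conditional-probability property, so the null set where the bridge is undefined plays no role after integration against $\nu$.
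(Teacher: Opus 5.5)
Your argument is correct and follows the paper's proof: the defining property of the regular conditional probability plus the tower property (your choice $B=S$), with the indicator-to-simple-to-bounded extension written out, where the paper compresses it into the claim that $z\mapsto\mathbb{E}^{z}_{0,T}[g(X_s)]$ is a version of $\mathbb{E}[g(X_s)\mid X_T]$. One small slip in your density cross-check: the step that eliminates $z$ there is $\int_S p_{s,T}(x,z)\,dz=1$, which is mass conservation over the terminal value and the classical counterpart of $\sum_f\Lambda_2^{\dagger}(F_f)=\mathbbm{1}$. It is not the harmonicity normalization \eqref{eq:hnorm}, which integrates over the intermediate point at fixed $z$ and says only that each bridge is a probability measure.
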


\begin{proof}
By the defining property of the regular conditional probability, the map $z \mapsto \mathbb{E}^{z}_{0,T}[g(X_s)]$ is a version of the conditional expectation $\mathbb{E}[g(X_s)\mid X_T]$, evaluated at $X_T = z$. The tower property of conditional expectation and the change of variables to the law $\nu$ of $X_T$ then give
\begin{equation}
\mathbb{E}[g(X_s)]
= \mathbb{E}\big[\, \mathbb{E}[g(X_s)\mid X_T] \,\big]
= \int_S \mathbb{E}^{z}_{0,T}[g(X_s)]\, \nu(dz),
\label{eq:towerproof}
\end{equation}
which is \eqref{eq:mixture}. The drift added inside each bridge---for Brownian motion the term $(z - X_t)/(T-t)$ of \eqref{eq:brownianbridge}, in general $a\,\nabla\log h$ with $h(\cdot)=p_{\cdot,T}(\cdot,z)$---alters $\mathbb{E}^{z}_{0,T}[g(X_s)]$ for each fixed $z$, but is averaged out by the $\nu$-mixture, so that $\mathbb{E}[g(X_s)]$ equals its unconditional value.
\end{proof}

\begin{remark}[Only the disintegration is used]
\label{rem:disint}
The proof of Proposition~\ref{prop:bridge} invokes only the existence of a regular conditional probability given $X_T$ and the tower property; the Markov property is nowhere used. The mixture decomposition therefore holds for \emph{any} process---Markovian or not---that admits the disintegration $\mathbb{P}^{z}_{0,T}$, and is in this respect more general than the bridge machinery of Sec.~\ref{sec:bridges} from which it borrows its interpretation. Markovianity is retained only to identify the conditioned laws with $h$-transformed dynamics and the forward tilt with the drift $a\,\nabla\log h$; drop it and Proposition~\ref{prop:bridge} still delivers the marginal invariance, now without a bridge-drift picture. Here the two anchors match rather than differ: Proposition~\ref{prop:nosignal} likewise assumes nothing of $\Lambda_1,\Lambda_2$ beyond complete positivity and trace preservation---no Markovianity, no CP-divisibility---so both are, at bottom, statements about conditioning alone. The generality gap resurfaces only where Sec.~\ref{sec:conclusion} locates the stress test: in the non-Markovian open-system regime the $h$-transform and the quantum smoother must be rebuilt, yet the marginal invariance of Propositions~\ref{prop:nosignal} and \ref{prop:bridge} is untouched, since it never rested on that scaffolding.
\end{remark}

\subsection{The dictionary}
\label{sec:table}

Propositions~\ref{prop:nosignal} and \ref{prop:bridge} are, at the level of their proofs, one structural fact: on each side, summing the joint statistics over the outcomes of the final measurement returns the earlier statistics unchanged. What differs is only the bookkeeping. The quantum side needs two ingredients---completeness of the final POVM, $\sum_f F_f = \mathbbm{1}$, which removes the measurement, and trace preservation of $\Lambda_2$, which removes the intervening map---so that marginalization over $f$ acts as the identity on the trace; the classical side folds both roles into a single property, the tower property of conditional expectation, the unit mass of each regular conditional probability playing the part of trace preservation. The two statements coincide,
\begin{equation}
\sum_f \Tr\!\big[\, F_f\,\Lambda_2(\,\cdot\,) \,\big] = \Tr[\,\cdot\,]
\quad\Longleftrightarrow\quad
\mathbb{E}\big[\, \mathbb{E}[\,\cdot \mid \sigma(X_T)\,] \,\big]
= \mathbb{E}[\,\cdot\,],
\label{eq:isomorphism}
\end{equation}
each asserting that marginalizing over the future is the identity on the earlier statistics. The correspondence is structural in exactly this sense, and no more: the tower property does not derive the trace rule, nor the trace rule the tower property, and neither theory reduces to the other. Around this anchor the dictionary reads as follows.

\begin{table*}[t]
\caption{The correspondence dictionary. Rows above the rule are theorem-level; the row below it is interpretive (Sec.~\ref{sec:scope}). Here $\{\mathcal{I}_m\}$ is the intermediate instrument, $\eff \equiv \Lambda_2^{\dagger}(F_f)$ the backward effect, $a = \sigma\sigma^{\!\top}$, and $\nu$ the terminal law of $X_T$.}
\label{tab:dictionary}
\begin{ruledtabular}
\begin{tabular}{lll}
Wheeler's thesis & Quantum image & Classical image \\
\colrule
W1 (participation): & instrument $\{\mathcal{I}_m\}$ fixes & choice of a conditioning \\
the question defines & the outcome set, hence & variable, i.e.\ a partition \\
the phenomenon & the sample space & of the sample space \\[4pt]
W2 (delayed choice): & post-selection; ABL rule & Doob $h$-transform; \\
a later choice & \eqref{eq:openabl}; & Markov bridge; \\
refines the past & Proposition~\ref{prop:nosignal} & Proposition~\ref{prop:bridge} \\[4pt]
``pull from the future'': & backward effect $\eff_t$; & bridge drift \\
a tilt of the & quantum Doob & $a\,\nabla\log h$ \\
conditioned past & transform & \\[4pt]
no retrocausation: & $\sum_f F_f = \mathbbm{1}$, TP of $\Lambda_2$ & tower property \\
earlier marginals & (Proposition~\ref{prop:nosignal}) & (Proposition~\ref{prop:bridge}) \\
are untouched & & \\
\colrule
W3 (It from Bit): & condition on preparations, & asymmetric access to \\
the arrow of time is & not outcomes; $\eff_t$ arises & the filtration $(\mathcal{F}_t)$ \\
an information asymmetry & only after post-selection & versus $\sigma(X_T)$ \\
\end{tabular}
\end{ruledtabular}
\end{table*}

Row by row: W1 becomes the statement that the instrument fixes which decomposition of the ensemble exists---participation as the choice of a partition, not the creation of matter. W2 becomes post-selection, with Corollary~\ref{cor:w2} marking its exact boundary. The felt ``pull from the future'' in delayed-choice narratives is the precise analogue of the bridge drift \eqref{eq:brownianbridge}: real within the conditioned description, absent from the unconditional one, and never a force. W3, below the line of Table~\ref{tab:dictionary}, is taken up next.

\section{What the framework does and does not do}
\label{sec:scope}

\subsection{Does not: derive the Born rule}
The Born rule enters Eq.~\eqref{eq:joint} as an input, exactly as Kolmogorov's axioms \cite{kolmogorov1933} enter Proposition~\ref{prop:bridge}. Nothing in the dictionary explains \emph{why} the trace formula gives probabilities; the dictionary is a map between two structures each of which presupposes its own probability calculus. Any suggestion that conditioning ``derives'' the Born rule confuses the direction of explanation.

\subsection{Does not: solve the measurement problem}
The forward--backward formalism of Sec.~\ref{sec:quantum} is a calculus of conditional statistics. It is compatible with collapse and no-collapse interpretations alike, and it does not select among them. What it does adjudicate is narrower: which temporal locutions about pre- and post-selected systems have operational content.

\subsection{Does: bound the legitimate use of participatory language}
Propositions~\ref{prop:nosignal} and \ref{prop:bridge}, together with the worked marginal of Sec.~\ref{sec:mzi}, divide participatory locutions sharply into the sound and the unsound. Sound: a future choice fixes which subensembles the earlier ensemble decomposes into, and within a subensemble the conditional statistics differ from the marginal ones (Corollary~\ref{cor:w2}). Unsound---demonstrably, not merely by convention---the claim that a future choice changes what happened, since no marginal at any earlier time responds to it. Wheeler's own disavowal of retrocausation, quoted in Sec.~\ref{sec:wheeler}, survives here in disciplined form: \emph{conditional} pasts are relative to the conditioning record, while the \emph{marginal} past is absolute within the theory.

\paragraph{Compatibility with the retrocausality debate.}
This boundary neither presupposes nor refutes the ontological arguments of Price and of Leifer and Pusey \cite{price2012,leifer2017}. Their claim concerns a time-symmetric \emph{ontological model}---one positing a state of reality that mediates preparation and outcome---which, under further assumptions, they argue must be retrocausal. Proposition~\ref{prop:nosignal} is instead an \emph{operational} statement about measurement marginals, shared by every interpretation. The two do not collide: an ontologically retrocausal model can, and by design does, reproduce the operational marginal invariance of Proposition~\ref{prop:nosignal}, just as it reproduces spacelike no-signaling. What the dictionary fixes is the operational content of delayed-choice language; whether one further posits a backward ontic influence is a separate question on which we take no stand, and on which Maudlin's dispute with the Leifer--Pusey premises \cite{maudlin2017} lies entirely---untouched by our propositions.

\subsection{Interpretive thesis: the arrow as filtration asymmetry}
We finally state, and clearly label as interpretation, the reading of W3 that the dictionary suggests. Embedded observers update on an increasing family of $\sigma$-algebras: records of preparations and past outcomes, never of future ones. Relative to this filtration, forward conditioning (prediction) and backward conditioning (post-selection) play asymmetric roles even though the underlying conditional calculus, Eqs.~\eqref{eq:forwardbackward}, is time-symmetric. On this reading the measurement arrow of time is an \emph{information asymmetry}---which $\sigma$-algebra one is inside---rather than a dynamical asymmetry. This is consonant with, but goes beyond, what the propositions establish; we offer it as the sharpest defensible version of ``It from Bit,'' not as a result. In this reading we stand close to Di Biagio, Don\`a, and Rovelli \cite{dibiagio2021}, who locate the operational arrow of time in the asymmetry between what an agent knows and what it seeks to infer---their ``arrow of inference''; our addition is to render that asymmetry in explicit measure-theoretic form, as the gap between the observed filtration $(\mathcal{F}_t)$ and the terminal $\sigma$-algebra $\sigma(X_T)$, and to set it as the single interpretive row (W3) of a dictionary whose other rows are theorems. We make no stronger claim for W3 than this: it is a measure-theoretic concretization of the inference-asymmetry of \cite{dibiagio2021}, congenial to the agent-relative (QBist) reading discussed in Sec.~\ref{sec:discussion}, and it yields no independent result---which is exactly why it sits below the rule in Table~\ref{tab:dictionary} and is offered as interpretation, not theorem. Existing instances where this asymmetry has been made quantitative include weak-value experiments \cite{ritchie1991,dressel2014}, past-quantum-state estimation in cavity and circuit QED \cite{gammelmark2013}, and quantum Schr\"odinger bridges connecting prescribed initial and final ensembles \cite{georgiou2025}.

\section{Relation to other frameworks}
\label{sec:discussion}

\emph{Two-state-vector formalism.} The TSVF \cite{aharonov1991} supplies the closed-system limit of Sec.~\ref{sec:quantum} and interprets the pair $(\langle\phi|, |\psi\rangle)$ ontologically. Our dictionary uses its statistics, Eq.~\eqref{eq:abl}, while remaining agnostic on the ontology; Corollary~\ref{cor:w2} is available to the TSVF as a consistency statement.

\emph{Consistent and decoherent histories.} The histories program of Griffiths, Omn\`es, and Gell-Mann and Hartle \cite{griffiths1984,omnes1992,gellmann1990} is the closest neighbor to the partition structure of W1. It assigns probabilities to sets of histories---partitions of the space of event sequences---subject to a consistency condition, is explicitly time-symmetric, and dispenses with both collapse and a fundamental role for measurement. Its single-framework rule, which forbids inferences that combine incompatible partitions, is the histories-level form of our Corollary~\ref{cor:w2}: the future choice of which partition to apply cannot rewrite an earlier marginal, because the ``which-path'' and ``interference'' descriptions are not jointly assertable. We work one level down, at the operational instrument--POVM description, and do not require the decoherence condition; W1 and the no-retrocausation boundary are, in that sense, the operational shadow of a single consistent set.

\emph{Transactional interpretation.} Cramer's advanced--retarded handshake \cite{cramer1986} shares the forward--backward geometry but posits a physical transaction. In the dictionary the backward object $\eff_t$ is inferential, on all fours with a backward information filter \cite{rauch1965}; no physical wave propagates backward.

\emph{QBism.} QBism \cite{fuchs2013} reads all quantum probabilities as an agent's conditional degrees of belief, which is congenial to the filtration reading of W3; the dictionary differs in isolating, via Proposition~\ref{prop:nosignal}, agent-independent marginal structure on which all agents must agree.

\emph{Stochastic--quantum correspondences.} Nelson's stochastic mechanics \cite{nelson1966,kuipers2023} and the Barandes correspondence \cite{barandes2023} seek to \emph{ground} quantum theory in stochastic processes. Our aim is orthogonal: we do not derive one theory from the other but exhibit a shared conditioning structure, Eq.~\eqref{eq:isomorphism}, that constrains temporal language in both.

\section{Conclusion}
\label{sec:conclusion}

Wheeler asked for \emph{it from bit}; the mathematics returns something more modest and more durable: \emph{conditioning without retrocausation}. The technical core is a single identity---marginalizing over the final measurement (the completeness of the POVM, $\sum_f F_f = \mathbbm{1}$, together with trace preservation of $\Lambda_2$) is the tower property of conditional expectation [Eq.~\eqref{eq:isomorphism}]---from which the whole dictionary follows. Read through it, participation is the choice of a partition (W1); delayed choice is post-selection, bounded by the marginal invariance of Proposition~\ref{prop:nosignal} (W2); the ``pull from the future'' is a bridge drift, real inside a conditioned description and absent from the unconditional one; and the arrow of time, on the interpretive reading we have marked as such, is the asymmetry of the filtrations an embedded observer inhabits (W3). The conditional calculus is time-symmetric; its causal structure is not.

That the operational content of delayed choice carries no retrocausation is, in the end, no revisionist reading of Wheeler. It is what Wheeler said when pressed (Sec.~\ref{sec:wheeler}), what Ellerman's separation-fallacy analysis concludes \cite{ellerman2015}, and what the single-framework rule of consistent histories enforces \cite{griffiths1984}; our contribution is to make it a one-line theorem and to exhibit its exact classical twin. The framework claims no more. It does not derive the Born rule, which it takes as an input on both sides, and it does not adjudicate the measurement problem, with which it is compatible either way. What remains is to sharpen the interpretive thesis of Sec.~\ref{sec:scope} into a quantitative measure of the filtration asymmetry, and to test the dictionary where it is most likely to strain---in the open-system, non-Markovian regime, where the quantum smoother and the classical bridge must be held to the same pairing, $\Tr[\eff_t\rho_t] \leftrightarrow \int\beta_t\,d\pi_t$. The dictionary of Table~\ref{tab:dictionary} is meant as a boundary stone: on one side a precise and usable participatory language, on the other the retrocausal overreach that Wheeler's poetry invites and his physics does not require.

\begin{acknowledgments}
\textit{Author contributions.}---C.-F.K.\ conceived the study, developed the theory and the symbolic and numerical verifications, and wrote the manuscript. K.-W.W.\ proposed the enlargement-of-filtrations framing of Wheeler's participatory universe. Both authors discussed the results and reviewed the manuscript. The authors received no specific funding for this work and declare no competing interests.
\end{acknowledgments}

\end{document}